\numberwithin{equation}{section}
\numberwithin{figure}{section}
\theoremstyle{plain}
\newtheorem{thm}{\protect\theoremname}
\providecommand{\theoremname}{Theorem}
\begin{document}

\title{Spread Spectrum Codes for\\
Continuous-Phase Modulated Systems}

\author{Gaurav Thakur\IEEEauthorrefmark{1}%
\thanks{\IEEEauthorrefmark{1}MITRE Corporation, McLean, VA 22102, email:
gthakur@alumni.princeton.edu . Approved for Public Release; Distribution
Unlimited. 13-0855%
}}

\date{October 26, 2013}
\maketitle
\begin{abstract}
We study the theoretical performance of a combined approach to demodulation
and decoding of binary continuous-phase modulated signals under repetition-like
codes. This technique is motivated by a need to transmit packetized
or framed data bursts in high noise regimes where many powerful, short-length
codes are ineffective. In channels with strong noise, we mathematically
study the asymptotic bit error rates of this combined approach and
quantify the performance improvement over performing demodulation
and decoding separately as the code rate increases. In this context,
we also discuss a simple variant of repetition coding involving pseudorandom
code words, based on direct-sequence spread spectrum methods, that
preserves the spectral density of the encoded signal in order to maintain
resistance to narrowband interference. We describe numerical simulations
that demonstrate the advantages of this approach as an inner code
which can be used underneath modern coding schemes in high noise environments.
\end{abstract}
Keywords: continuous phase modulation, random codes, bit error rates,
asymptotic bounds

\section{Introduction\label{SecIntro}}

The traditional approach to designing a communications system, dating
back to Shannon's time, was to treat coding and modulation as separate
procedures that can each be studied and optimized individually. However,
from a signal detection viewpoint, it is more natural to think of
them as a single, unified process. This principle was first exploited
by Ungerboeck with the method of trellis-coded modulation (TCM) \cite{PDL90,Un82},
which sparked considerable activity in the development of such schemes,
known as \textit{waveform coding} or \textit{coded modulation} \cite{AAS86,AS91}.
TCM has seen widespread use in applications such as phone-line modems
and many other waveform coding approaches have also been proposed,
based on variations of TCM \cite{Yi96} as well as other types of
waveforms such as wavelet elements \cite{JM05,Li97} or principal
components \cite{MA03}. Many modern implementations use a combination
of the approaches and iterate decoding and demodulation in order to
be compatible with interleavers \cite{MA01,NS01}.\\

In this paper, we study a binary continuous-phase modulation (CPM),
packetized communications system in an environment with high noise
and/or strong narrowband interference (NBI) at unknown frequencies.
We are interested in noise regimes where the raw, symbol error rate
(SER) is very high, such as $0.3$ or higher. In such environments,
many forms of modern, powerful forward error correction (FEC) such
as short-length turbo or other convolutional-based codes either have
a minimal or negative effect on error rates compared to weak codes
such as simple repetition codes \cite[p. 464-465]{Pr00}, or have
long block lengths that are not suitable for use with short data frames.
For these reasons, we examine CPM based on a simple $(N,1)$ repetition
code in an additive Gaussian white noise channel. Demodulating the
resulting signal on a single-symbol basis is generally not optimal,
but is often done in practice due to other constraints such as the
presence of an interleaver in the system. On the other hand, if we
consider the resulting signal as a waveform code and combine the decoding
and demodulation into a single correlation classifier, we would expect
to get a significant improvement in the bit error rate (BER). The
main goal of this paper is to mathematically analyze this improvement
as the rate $N$ increases, and to precisely quantify the difference
in BERs between the approaches.\\

The paper establishes asymptotic bounds that compare the performance
of demodulating this waveform code on a single-codeword basis with
that of demodulation on a single-symbol basis, followed by a hard-decision
decoder to unravel the block code. For a fixed and sufficiently high
noise power $\sigma^{2}$, as the code redundancy $N\to\infty$, we
quantify the differences in the error probabilities between the combined
and separated demodulation approaches under both coherent and noncoherent
demodulation methods. This type of result is different from most asymptotic
performance bounds in the coding theory literature, which let $\sigma^{2}\to0$
for fixed code rates $N$ \cite[Ch. 8]{Pr00}, but it gives insight
into the nature of the code at high noise levels. In numerical simulations,
we find that the combined approach can drive down the error rates
by an order-of-magnitude over the separated approach. Similar techniques
for increasing the distances between CPM waveforms and reducing BERs
with such a combined approach have been studied in previous works
(e.g. \cite{Fo94,GBD07,MP84}), but only numerically for specific
block codes at fixed rates, as opposed to the asymptotic, theoretical
results we develop in this paper. Our results justify the use of repetition-like
CPM waveform codes for remote command channels and other applications
where the background noise is strong and small packet sizes prevent
the use of long-length codes, but where we also have a lot of room
to reduce the data rate using FEC. Once the error rate has been reduced
to an acceptable level, around $10^{-2}$ or $10^{-3}$, this simple
waveform code can be concatenated by more powerful FEC methods that
take over and bring it down further.\\

In this context, we also discuss a technique called ``spread coding''
to preserve the spectral density of the encoded signal and avoid periodicities
that result in spectral spikes. This approach is simply a version
of direct-sequence spread spectrum (DSSS) methods and consists of
encoding the data stream with a predetermined sequence of length-$N$
code words. These code words are known at both the transmitter and
receiver ends and can be generated in a variety of ways, such as pseudorandomly
or by a maximum-length shift register. DSSS allows a signal to maintain
a flat spectral density and increases the robustness of the system
to NBI (\cite{AAS86,La88,LL94}), in contrast to using a straight
repetition code or by simply reducing the symbol rate, and is important
in situations where interference mitigation techniques such as frequency-hopping
are not usable. However, in contrast to standard implementations and
uses of DSSS \cite{LL94,MCB13}, we think of spread coding as an FEC
method and study its effect on reducing the BER. In comparison with
more sophisticated waveform or block coding schemes, the DSSS-based
spread coding has the advantage of being compatible with existing,
uncoded CPM systems using standard hardware on the transmitter side,
and the short block lengths are compatible with small packets and
result in high speed, low complexity demodulation algorithms at the
receiver end.\\

This paper is organized as follows. Section \ref{SecSystem} discusses
the communication system and the rationale for our design choices
in more detail. The main theoretical results of the paper are stated
and described in Section \ref{SecPerf}. In Section \ref{SecSim},
we run some simulations that confirm and extend these results numerically
and discuss comparisons with other coding schemes under similar noise
environments. Appendix \ref{SecBG} reviews some background material
on statistical signal classification, and the proofs of the theorems
in Section \ref{SecPerf} are developed in Appendix \ref{SecProofs}.

\section{System description\label{SecSystem}}

Continuous-phase modulation is an effective transmission mechanism
in bandwidth- or power-limited environments. It maintains a constant
power envelope and can thus be used with nonlinear amplifiers. It
is also relatively robust to local oscillator drift at the transmitter
and is well suited for carrier tracking algorithms that can mitigate
this. However, the optimal demodulator structure at the receiver is
relatively complex due to the need to account for the memory in the
modulation. We follow the treatment in \cite[p. 185-201]{Pr00} and
consider a real-valued, binary CPM signal at a known carrier frequency
that the receiver picks up. If $B=\{b_{k}\}_{-\infty\leq k\leq L}$
is a sequence of binary symbols to be modulated, then the CPM signal
has the form
\[
s_{B}(t)=\sqrt{\frac{2E}{T}}\cos\bigg(2\pi\bigg(\theta+\omega_{c}t+\frac{h}{2}\times
\]
\begin{equation}
\sum_{k=-\infty}^{N}(2b_{k}-1)\int_{-\infty}^{t}F(s-kT)ds\bigg)\bigg),\label{CPM}
\end{equation}

where $E$ is the signal energy, $T$ is the time interval for one
symbol, $\omega_{c}$ is the carrier frequency, $\theta$ is the initial
phase, $h$ is the modulation index and $F$ is a nonnegative frequency
shaping function with $\left\Vert F\right\Vert _{L^{1}(\mathbb{R})}=1$.
Alternatively, we can also start from a complex baseband version of
(\ref{CPM}), which would lead to equivalent results. For example,
the Gaussian frequency-shift keying (GFSK) modulation scheme is defined
by the shaping function
\begin{eqnarray}
F(t) & = & \frac{\mathrm{1}}{2}\mathrm{erf}\bigg(\frac{2\pi\mathrm{BT}}{\sqrt{2\log2}}\left(t+\frac{1}{2}\right)\bigg)-\nonumber \\
 &  & \frac{1}{2}\mathrm{erf}\bigg(\frac{2\pi\mathrm{BT}}{\sqrt{2\log2}}\left(t-\frac{1}{2}\right)\bigg),\label{GFSK}
\end{eqnarray}

where BT, the bandwidth-time product, is a fixed parameter. This particular
scheme has advantages in NBI-limited environments due to its flat
spectral density shape and in-band spectral efficiency, and it is
used with $h=0.5$ and $\mathrm{BT}=0.3$ in several well-known communications
protocols such as GSM and Bluetooth. For the rest of the section,
we set $E=T=1$ to simplify the notation. We also make the approximation
that $\omega_{c}\to\infty$, which is effectively a rigorous form
of the standard ``narrowband assumption'' that sufficiently high
frequencies are filtered out in the receiver's basebanding process
(including in particular, any cross-term components at $2\omega_{c}$).
CPM signals can be expressed in several alternate but equivalent forms,
such as linear combinations of PAM signals \cite{MM95} or the outputs
of a time-invariant system \cite{Ri88}, but the standard form (\ref{CPM})
will be convenient for our purposes.\\

As discussed in Section \ref{SecIntro}, we consider simple repetition-like
codes for the symbol sequence $B$, which are motivated by a scenario
where short data packets are transmitted and need to be decoded in
real time. Command channel uplinks for satellites and unmanned aerial
vehicles often send data in this manner, and in some cases require
the use of codes with small block or constraint lengths that fit into
short packets and can be rapidly decoded by the receiver. Block length
constraints also appear in situations when the data is transmitted
continuously, but the symbols need to fit into an existing framing
structure with short data frames. In general, any repetition-based
code has weak distance properties and would not approach the classical
Shannon bound at a given SNR, but under such constraints on the code
length, weak block codes can achieve near-optimal performance as shown
in \cite{DDP98,DDP98-2}. Using sphere-packing bounds, these papers
show that at a maximum block length of $7$ symbols, for example,
a bit-to-noise ratio $E_{b}/N_{0}$ of roughly $5\mathrm{db}$ is
needed to achieve a BER of $10^{-4}$, and simple codes such as Hamming
codes are close to optimal. Some modern turbo codes (e.g. from the
CCSDS standard \cite{CCSDS07}) perform well under very low SNR and
come close to the Shannon bound, but use long constraint lengths (of
16000 or more output symbols) and are unsuitable for use with short
data packets. Under the combined demodulation approach we consider,
the correlator bank at the receiver (either coherent or noncoherent)
checks against only the possible waveforms $s_{B}$ that can occur,
given knowledge of the possible code words $B$.\\

The symbol sequence $B$ under a spread code consists of a fixed sequence
of pseudorandom code words, each $N$ symbols long. This keeps the
spectral density shape of a given modulation scheme unchanged from
an uncoded signal, maintaining the same spectral efficiency and robustness
to NBI. This coding approach is well suited for binary CPM, where
the code words $B$ can be chosen to be complements of each other
and correspond to a ``0'' or ``1'' at any given bit position,
and the fixed distance between them allows for a tractable mathematical
analysis in Section \ref{SecPerf}. We focus on this binary case in
the rest of the paper, but the basic concept can be used with larger
alphabet sizes as well, with the code words all taken to be pseudorandom.
The use of pseudorandom code words in this fashion can be thought
of as a DSSS technique, with the encoded symbols corresponding to
DSSS ``chips,'' and has been investigated in the context of CPM
in several papers (\cite{AHK98,La88,LL94,MCB13}). However, DSSS is
typically used to enable multi-user communications or to prevent detection
by a third party (low probability of intercept), rather than as an
FEC technique that reduces the BER, and has been studied primarily
in the former context. Whereas DSSS is traditionally used to expand
the signal's spectral density at a fixed data rate, the DSSS approach
we consider in this paper keeps the spectral density unchanged and
gives improved BER performance at a reduced data rate.\\

At the receiver, demodulation of a CPM signal can be performed either
coherently or noncoherently \cite[p. 295-299]{Pr00} at any given
symbol position, depending on whether the signal's initial phase is
known and kept track of (e.g. using a phase-locked loop and a Viterbi
state estimator; see \cite{CR99,CR97,Pr00}). We analyze the BERs
of both formulations in the next section, but in either case, under
the combined demodulation approach, the receiver correlates any $N$-symbol
block of the signal against only the two possible waveforms that can
appear in that position. In general, coherent signal classification
effectively increases the signal power over an equivalent noncoherent
problem and has an effect similar to a $3\mathrm{db}$ SNR improvement
\cite{SSM05}. On the other hand, the special structure of CPM signals
allows for the design of noncoherent methods that demodulate over
several data bits at a time and approach the performance of a fully
coherent classifier \cite{CR99}, and we use one such method in the
simulations in Section \ref{SecSim}. An intermediate approach is
for the demodulator to output soft, single-symbol decisions that are
passed into the decoder, as done in serially concatenated schemes
\cite{MA01,NS01}. However, the soft decisions are typically obtained
under independence or Markov assumptions, and under channels with
NBI components, they do not preserve dependencies between consecutive
decisions and are generally not equivalent to the combined approach.
Serially concatenated schemes often use long interleavers (spanning
1000 or more symbols) between the coding and modulation in order to
remove these dependencies in practice, but such interleavers are not
suitable for use with short packets. Even under a purely white noise
channel, a soft decision demodulator cannot be expressed as a likelihood
ratio (see Appendix \ref{SecBG}), so it is difficult to obtain sharp
theoretical bounds using a signal detection framework as we do in
this paper.\\

In practice, a spread code is best used as an inner code dropped into
an existing, uncoded communications system, chosen to have a high
enough code rate to bring the SER down to $10^{-2}$ or $10^{-3}$.
It can then be concatenated with a more powerful outer code that is
effective at these lower input error rates, without changing any spectral
characteristics of the signal that other aspects of the existing system
may be built around. Note that the spread coding and combined demodulation
approach is specifically meant to operate under a high noise level
and short block length constraints, and other schemes may be more
appropriate for different system requirements on e.g. the spectral
efficiency \cite{BFC09} or receiver complexity \cite{BC07,CR97}.
We summarize the description of the entire system in Figure \ref{FigBlock},
and show an example of a coded CPM signal's phase in Figure \ref{FigPhase}.

\begin{figure}[h]
\centering{}\includegraphics[trim=1.8in 2.5in 1.8in 2.5in, clip=true, scale=0.5]{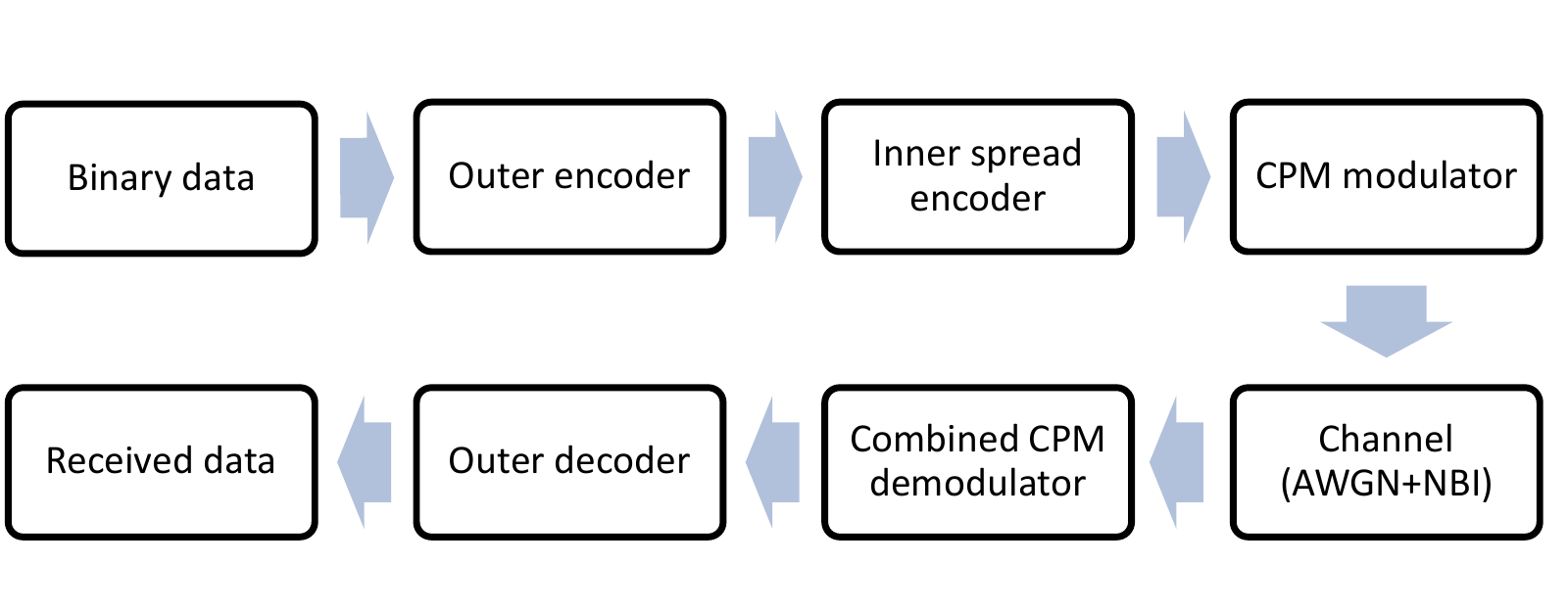}\caption{\label{FigBlock} A block diagram of the spread coded CPM system.}
\end{figure}
\begin{figure}[h]
\centering{}\includegraphics[scale=0.5]{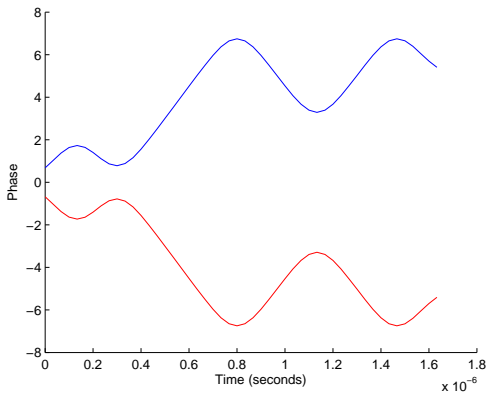}\caption{\label{FigPhase} An example GFSK signal's phase under the two possible
spread codes at a given location.}
\end{figure}

\section{Asymptotic performance of demodulation\label{SecPerf}}

We proceed to state the main results of the paper, comparing the performance
of the two demodulation approaches. The proofs of the theorems in
this section are deferred to Appendix \ref{SecProofs}. We make several
simplifying assumptions to allow for a tractable mathematical theory
of asymptotic bit error rates. We assume that the symbol sequence
$B$ is binary, so that the code words at any given bit position are
just complements of each other and have a simple distance structure.
We also assume that the code words are formed from a repetition-like
code, which results in periodic distances between CPM signals, and
that the background is additive Gaussian white noise, which leads
to explicit symbolic formulas for error probabilities in signal classification
(see Appendix \ref{SecBG}). However, our results allow for arbitrary
CPM pulse shapes, such as GFSK or raised cosine (RC) pulses.\\

Our first result establishes sharp asymptotic bounds on the distance
between two complemented binary CPM waveforms, as would be the case
in a repetition code, over an observation interval of $N$ symbols.
For any functions $f$ and $g$, we use the Landau notation $f(N)\sim g(N)$
to denote $f(N)/g(N)\to1$ as $N\to\infty$.
\begin{thm}
\noindent \label{ThmL2}For any binary symbol sequence $B$ ending
with $N$ identical symbols, let $B'$ be the same sequence with those
$N$ symbols flipped. Let $s_{B}$ and $s_{B'}$ be the corresponding
CPM signals given by (\ref{CPM}). Assume that $F$ is zero outside
$[-1,1]$ and that $F(t)=F(-t)$. As $N\to\infty,$
\begin{equation}
\lim_{\omega_{c}\to\infty}\left\Vert s_{B}-s_{B'}\right\Vert _{L^{2}(0,N)}\sim\sqrt{2N},\label{L2Bound}
\end{equation}
and for the complex-analytic signals \textup{$P^{+}s_{B}$} and $P^{+}s_{B'}$
with $\cos(\cdot)$ in (\ref{CPM}) replaced by $2^{-1/2}\exp(i\cdot)$,
\begin{equation}
\lim_{\omega_{c}\to\infty}\left\Vert P^{+}s_{B}-P^{+}s_{B'}\right\Vert _{L^{2}(0,N)}\sim\sqrt{2N},\label{L2BoundP+}
\end{equation}
\begin{equation}
\lim_{\omega_{c}\to\infty}\frac{\left|\int_{0}^{N}P^{+}s_{B}(t)\overline{P^{+}s_{B'}(t)}dt\right|}{\left\Vert P^{+}s_{B}\right\Vert _{L^{2}(0,N)}\left\Vert P^{+}s_{B'}\right\Vert _{L^{2}(0,N)}}\sim\frac{E_{0}(N)}{\sqrt{2}N},\label{IPBound}
\end{equation}
where $E_{0}$ is a function satisfying $|E_{0}(N)|\leq\frac{3}{2}\left(1+\frac{1}{h}\right)$
for all $N$.
\end{thm}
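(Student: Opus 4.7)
The plan is to reduce all three left-hand sides to a single oscillatory integral in the phase difference and then exploit the symmetry of $F$ to bound that integral uniformly in $N$ by a constant depending only on $h$.

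First, I would write each left-hand side in terms of
\[
I(N) := \int_0^N e^{i 2\pi \Delta\phi(t)}\, dt, \qquad \Delta\phi(t) := \phi_B(t) - \phi_{B'}(t),
\]
where $\phi_B$ denotes the phase inside the cosine in (\ref{CPM}). For the real signals, expanding $\|s_B - s_{B'}\|^2 = \|s_B\|^2 + \|s_{B'}\|^2 - 2\langle s_B, s_{B'}\rangle$ and applying $2\cos^2 A = 1 + \cos 2A$ together with $2\cos A \cos B = \cos(A - B) + \cos(A + B)$, every term containing the carrier $\omega_c$ is $2\omega_c$-oscillatory and vanishes as $\omega_c \to \infty$ by the Riemann--Lebesgue lemma, leaving $\|s_B - s_{B'}\|_{L^2(0,N)}^2 \to 2N - 2\,\mathrm{Re}\,I(N)$. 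For the analytic signals $|P^+ s_B| \equiv 1$ pointwise, so $\|P^+ s_B\|_{L^2(0,N)}^2 = \|P^+ s_{B'}\|_{L^2(0,N)}^2 = N$ exactly, $\|P^+ s_B - P^+ s_{B'}\|_{L^2(0,N)}^2 = 2N - 2\,\mathrm{Re}\,I(N)$, and the ratio in (\ref{IPBound}) equals $|I(N)|/N$, all with no $\omega_c$ limit required. The three claims thus reduce to a uniform-in-$N$ bound on $|I(N)|$.

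Second, I would extract the structure of $\Delta\phi$ on a single interior symbol interval. Writing $\psi(t) := \int_{-\infty}^t F(s)\, ds$, only the $N$ flipped bits contribute to the phase difference, so $\Delta\phi(t) = \pm h \sum_{k \in \mathrm{flipped}} \psi(t - k)$. Because $F$ is supported in $[-1, 1]$, at any $t \in (n, n+1)$ in the interior of the observation window only the two pulses with $k = n$ and $k = n+1$ are actively transitioning, earlier ones have saturated to $1$, and later ones are still $0$. This gives $\Delta\phi(t) = h\bigl(n + \psi(t - n) + \psi(t - n - 1)\bigr)$. The symmetry $F(t) = F(-t)$ forces $\psi(t) + \psi(-t) = 1$, so $g(v) := \psi(v + 1/2) + \psi(v - 1/2)$ satisfies $g(-v) = 2 - g(v)$. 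Setting $v := t - n - \tfrac{1}{2} \in [-\tfrac{1}{2}, \tfrac{1}{2}]$, I would write $\Delta\phi(t) = h(n + 1) + h\,d(v)$ with $d(v) := g(v) - 1$ odd.

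Third, I would combine this structure with a geometric-series argument. The per-symbol integral factors as
\[
\int_n^{n+1} e^{i 2\pi \Delta\phi(t)}\, dt = e^{i 2\pi h (n+1)} \int_{-1/2}^{1/2} e^{i 2\pi h\, d(v)}\, dv,
\]
and since $d$ is odd, the imaginary part of the inner integral vanishes, leaving a real constant $C_h$ with $|C_h| \le 1$ that is independent of $n$. Summing over the interior values of $n$ produces $C_h \sum_n e^{i 2\pi h (n+1)}$, a truncated geometric series of modulus at most $1/|\sin(\pi h)|$. Adding a bounded $O(1)$ contribution from the one or two boundary unit intervals, where only one flipped-bit pulse is active and the odd-symmetry cancellation does not apply, yields $|I(N)| \le 1/|\sin(\pi h)| + O(1)$. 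The stated envelope on $E_0(N) := \sqrt{2}\,|I(N)|$ then follows from elementary inequalities such as $\sin(\pi h) \ge 2h$ on $(0, 1/2]$ applied to the geometric piece, combined with direct trivial bounds on the boundary integrals; and (\ref{L2Bound}) and (\ref{L2BoundP+}) follow since each of the relevant squared norms equals $2N - O(1/h) \sim 2N$.

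The main obstacle is the boundary bookkeeping: the oddness cancellation requires both flanking flipped-bit pulses to be present, which fails in the first and last symbol intervals of the observation window, and tracking the $h$-dependent constants through these edge terms so that they combine with the geometric-sum bound into the single clean envelope $\tfrac{3}{2}(1 + 1/h)$ is where the main care is required.
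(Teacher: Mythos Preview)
Your reduction to the single oscillatory integral $I(N)$ and the per-symbol factorisation are correct, and this is a genuinely different route from the paper's. After the same initial expansion, the paper does not analyse $I(N)$ directly; it writes the integrand as $1-\cos\bigl(2\pi(\theta_0+hJ(N,t))\bigr)$ with $J(N,t)=\sum_k\psi(t-k)$, approximates $1-\cos(2\pi x)$ by the period-one triangle wave $2\sum_m H(x-m)$ (which has mean exactly $1$, giving the main term $2N$), and controls the approximation error $E_2(x)=2\sum_m H(x-m)-(1-\cos 2\pi x)$ via its zero mean and the antisymmetry $E_2(x)=-E_2(\tfrac12-x)$. Because $J(N,m)=m+\tfrac12$ at every integer $m$, the paper argues that $\int_0^{M'}E_2(\theta_0+hJ)\,dt=0$ for some $M'$ within $1/h$ of $N$, so only a tail on $[M',N]$ of length at most $1/h$ survives; the real and imaginary parts of the correlation are handled by running the same computation with $\theta_0=0$ and $\theta_0=\tfrac14$. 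Your oddness observation on $d(v)$ is the per-symbol counterpart of the paper's antisymmetry of $E_2$, and collapsing everything to a geometric series in $e^{i2\pi h}$ is arguably more transparent than the hat-function bookkeeping.

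Where your argument falls short of the theorem as stated is the explicit envelope on $E_0$. Your estimate is $|I(N)|\le |C_h|/|\sin(\pi h)|+O(1)$, and the inequality $\sin(\pi h)\ge 2h$ you invoke is only valid on $(0,\tfrac12]$; for $h$ near any positive integer your bound blows up, whereas the claimed $\tfrac32(1+1/h)$ does not. Closing this gap via your route would require showing that $|C_h|=O(|\sin(\pi h)|)$ near integers, and that does \emph{not} follow from the oddness of $d$ alone: for $F$ highly concentrated near the origin one has $d\approx 0$ on $(-\tfrac12,\tfrac12)$ and hence $C_h\approx 1$. The paper's argument sidesteps the geometric sum entirely and obtains the $1/h$ directly from the length of the residual interval $[M',N]$, which is why its constant is stated in the form $\tfrac32(1+1/h)$ rather than in terms of $|\sin(\pi h)|$.
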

Theorem \ref{ThmL2} indicates that even when the CPM waveforms are
not chosen to be orthogonal for any fixed $N$, such as e.g. binary
MSK, they still become orthogonal in the limit as $N\to\infty$. In
practice, the condition on $F$ can be relaxed to read that $F$ is
``approximately'' zero outside $[-1,1]$, as is the case with GFSK
modulation, and the asymptotic bound still holds.\\

This result is used to compare the error probabilities of performing
demodulation and decoding jointly, denoted by $p_{\mathrm{Joint}}^{\mathrm{NC}}$
and $p_{\mathrm{Joint}}^{\mathrm{C}}$ for noncoherent and coherent
demodulation respectively, with that of doing them separately, denoted
by $p_{\mathrm{Separate}}^{\mathrm{NC}}$ and $p_{\mathrm{Separate}}^{\mathrm{C}}$.
\begin{thm}
\label{ThmCombinedDD}(Combined decoding and demodulation) Suppose
the modulation is given by (\ref{CPM}) with the constraints in Theorem
\ref{L2Bound}, and let $\omega_{c}\to\infty$. Then for a fixed noise
power $\sigma^{2}$, as $N\to\infty$,
\[
p_{\mathrm{Joint}}^{\mathrm{NC}}\sim\frac{1}{2}I_{0}\left(\frac{\sqrt{2}E_{0}(N)}{8\sigma^{2}}\right)e^{-\frac{N}{4\sigma^{2}}}\approx\frac{1}{2}e^{-\frac{N}{4\sigma^{2}}}
\]
and
\[
p_{\mathrm{Joint}}^{\mathrm{C}}\sim\frac{\sigma}{\sqrt{\pi N}}e^{-\frac{N}{4\sigma^{2}}}.
\]

\end{thm}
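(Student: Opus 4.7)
The plan is to reduce each probability to a standard binary hypothesis-test formula in AWGN (as collected in Appendix~\ref{SecBG}) and then substitute the sharp asymptotics from Theorem~\ref{ThmL2}.

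For the coherent case, the matched-filter classifier between $s_B$ and $s_{B'}$ gives
\[
p_{\mathrm{Joint}}^{\mathrm{C}}=Q\!\left(\frac{\|s_B-s_{B'}\|_{L^2(0,N)}}{2\sigma}\right).
\]
Substituting (\ref{L2Bound}) in the limit $\omega_c\to\infty$, the argument becomes $\sqrt{N/(2\sigma^2)}$ and tends to infinity with $N$. I would then apply the Mills-ratio asymptotic $Q(x)\sim(x\sqrt{2\pi})^{-1}e^{-x^2/2}$, which produces the stated $\sigma/\sqrt{\pi N}\cdot e^{-N/(4\sigma^2)}$.

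For the noncoherent case, the optimal envelope classifier for the analytic signals $P^+s_B$, $P^+s_{B'}$ with unknown initial phase has error probability of Marcum form
\[
p_{\mathrm{Joint}}^{\mathrm{NC}}=Q_1(a,b)-\tfrac{1}{2}I_0(ab)\,e^{-(a^2+b^2)/2},
\]
with $a,b=\sqrt{(\gamma_b/2)(1\mp\sqrt{1-|\rho|^2})}$, $\gamma_b=\|P^+s_B\|^2_{L^2(0,N)}/(2\sigma^2)$, and $\rho$ the normalized complex cross-correlation of $P^+s_B$ and $P^+s_{B'}$. Since $|P^+s_B|\equiv 1$ pointwise, $\|P^+s_B\|^2=N$ exactly, giving $\gamma_b=N/(2\sigma^2)$ and $(a^2+b^2)/2=\gamma_b/2=N/(4\sigma^2)$. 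By (\ref{IPBound}), $|\rho|\sim E_0(N)/(\sqrt{2}N)$, so $a\to 0$ and $b\to\infty$ while $ab\sim\gamma_b|\rho|/2\sim\sqrt{2}E_0(N)/(8\sigma^2)$ remains bounded. I would then invoke the series $Q_1(a,b)=\sum_{k\ge 0}(a/b)^k I_k(ab)e^{-(a^2+b^2)/2}$ valid for $b>a$ to rewrite
\[
p_{\mathrm{Joint}}^{\mathrm{NC}}=\tfrac{1}{2}I_0(ab)\,e^{-(a^2+b^2)/2}+\sum_{k\ge 1}(a/b)^k I_k(ab)\,e^{-(a^2+b^2)/2},
\]
and, since $a/b\to 0$ with $ab$ bounded, conclude that the $k\ge 1$ tail is negligible beside the leading term.

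The hard part is the noncoherent analysis: the sharp $O(1/N)$ decay of $|\rho|$ supplied by (\ref{IPBound}) is exactly what simultaneously forces $a/b\to 0$ (killing the tail) and keeps $ab$ bounded (so that $I_0(ab)=O(1)$ rather than exponentially large). The uniform bound $|E_0(N)|\le\tfrac{3}{2}(1+1/h)$ is also what makes the displayed $\approx$ meaningful at high noise levels, since $I_0$ at a bounded argument stays near $1$. The coherent case is essentially routine once Theorem~\ref{ThmL2} is in hand.
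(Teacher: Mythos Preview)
Your proposal is correct and follows essentially the same route as the paper: the coherent case is the $Q$/$\mathrm{erf}$ formula plus the Mills-ratio asymptotic applied to the distance from Theorem~\ref{ThmL2}, and the noncoherent case is the Marcum/Bessel-series expression (\ref{PNonCoh}) with $(a,b)\leftrightarrow(u,v)$, where (\ref{IPBound}) forces $u/v\to 0$ while $uv$ stays bounded so only the $k=0$ term survives. The only cosmetic difference is your parametrization via $\gamma_b=\|P^{+}s_B\|^2/(2\sigma^2)$ instead of the paper's $\|f_1-f_2\|^2/(8\sigma^2)$, which coincide asymptotically since $\|P^{+}s_B-P^{+}s_{B'}\|^2\sim 2N=2\|P^{+}s_B\|^2$.
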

Here, $I_{0}$ is the modified Bessel function defined by $I_{0}(x)=\sum_{n=0}^{\infty}\frac{(x/2)^{2n}}{n!^{2}}$.
In the noncoherent bound, the coefficient turns out to be very close
to $\frac{1}{2}$ when $\sigma^{2}\geq1$. In general, the function
$E_{0}$ in (\ref{IPBound}) is oscillatory and it is not easy to
characterize its behavior precisely, but it is typically between $0$
and $1$, and since $\frac{1}{2}I_{0}(x)\sim\frac{1}{2}+\frac{1}{8}x^{2}$
as $x\to0$, the coefficient is about $\frac{1}{2}$.
\begin{thm}
\label{ThmSeparateDD}(Separate decoding and demodulation) Suppose
the modulation is given by (\ref{CPM}) with the constraints in Theorem
\ref{L2Bound}, and let $\omega_{c}\to\infty$. Assume that $N$ is
odd and $C=\lim_{\omega_{c}\to\infty}\left\Vert s_{\{0\}}-s_{\{1\}}\right\Vert _{L^{2}(0,1)}\leq\sigma$,
i.e. the noise is at least as strong as the signal separation. Then
for a fixed noise power $\sigma^{2}$ and for all sufficiently large
$N$,
\[
p_{\mathrm{Separate}}^{\mathrm{NC}}\geq\frac{\left(e^{-1/8}\left(2-e^{-1/8}\right)\right)^{\frac{N+1}{2}}}{\sqrt{2\pi N}\left(1-\frac{1}{2}e^{-1/8}\right)}\approx\frac{0.71}{\sqrt{N}}\times0.993^{N+1}
\]
and
\[
p_{\mathrm{Separate}}^{\mathrm{C}}\geq\frac{1}{\sqrt{N}}\left(1-\frac{1}{2\pi}\right)^{\frac{N+1}{2}}\approx\frac{1}{\sqrt{N}}0.917^{N+1}.
\]

\end{thm}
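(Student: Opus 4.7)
I would interpret $p_{\text{Separate}}$ as the tail of a binomial coming from majority-vote hard decoding of $N$ independent single-symbol decisions, and bound that tail below via its central term together with Stirling's formula. Under separate demodulation each of the $N$ repeated CPM symbols is demodulated in its own time slot, so the $N$ decisions are driven by noise samples from disjoint intervals of the AWGN process and hence are i.i.d.\ Bernoulli$(p)$ for some common symbol error probability $p$. With $N$ odd the $(N,1)$ repetition decoder errs precisely when at least $(N+1)/2$ decisions are wrong, giving
\[
p_{\text{Separate}}\;=\;\sum_{k=(N+1)/2}^{N}\binom{N}{k}p^{k}(1-p)^{N-k}\;\geq\;\binom{N}{(N+1)/2}\,p^{(N+1)/2}(1-p)^{(N-1)/2},
\]
a lower bound valid for any $p \in [0,\tfrac{1}{2}]$.

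\textbf{Symbol-level lower bounds and Stirling asymptotics.} The standard AWGN detection formulas (Appendix \ref{SecBG}) give $p^{C}=Q(C/(2\sigma))$ coherently and $p^{\mathrm{NC}}=\tfrac{1}{2}e^{-C^{2}/(8\sigma^{2})}$ noncoherently, so the hypothesis $C\leq\sigma$ yields $p^{\mathrm{NC}}\geq\tfrac{1}{2}e^{-1/8}$ at once, and the elementary bound $Q(x)\geq\tfrac{1}{2}-x/\sqrt{2\pi}$ gives $p^{C}\geq\tfrac{1}{2}(1-1/\sqrt{2\pi})$. Direct expansion then produces the exponential-rate identities $4p^{\mathrm{NC}}(1-p^{\mathrm{NC}})\geq e^{-1/8}(2-e^{-1/8})$ and $4p^{C}(1-p^{C})\geq 1-1/(2\pi)$. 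Writing $p_{0}$ for either lower bound and applying Stirling's estimate $\binom{N}{(N+1)/2}\sim 2^{N}\sqrt{2/(\pi N)}$, the algebraic rearrangement $2^{N}p_{0}^{(N+1)/2}(1-p_{0})^{(N-1)/2}=(4p_{0}(1-p_{0}))^{(N+1)/2}/(2(1-p_{0}))$ converts the central-term bound into $(4p_{0}(1-p_{0}))^{(N+1)/2}/((1-p_{0})\sqrt{2\pi N})$; substituting $p_{0}=\tfrac{1}{2}e^{-1/8}$ reproduces the noncoherent estimate verbatim. For the coherent case I would refine by summing consecutive binomial terms as a geometric series with ratio $p_{0}/(1-p_{0})$, reaching the tighter constant $1/((1-2p_{0})\sqrt{2\pi N})$; with $p_{0}=\tfrac{1}{2}(1-1/\sqrt{2\pi})$ one has $1-2p_{0}=1/\sqrt{2\pi}$, and the constant collapses to exactly $1/\sqrt{N}$, giving the stated coherent bound.

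\textbf{Main obstacle.} The delicate step is securing the noncoherent single-symbol inequality $p^{\mathrm{NC}}\geq\tfrac{1}{2}e^{-C^{2}/(8\sigma^{2})}$ as a rigorous lower bound rather than the usual engineering approximation. The exact noncoherent error for equal-energy binary signals involves a difference of a Marcum $Q_{1}$ function and a Bessel-weighted exponential, and one must check the direction of the inequality in the high-noise regime $C \leq \sigma$ relevant here. I would handle this by returning to the likelihood-ratio test in Appendix \ref{SecBG}, writing the error as an integral of a Rician density, and restricting to a sub-region on which the Bessel term can be dropped so that the clean exponential bound survives. The coherent counterpart is easier, since the Taylor lower bound on $Q$ is one line. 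Once the symbol-level inequalities are in place, the Stirling estimate of $\binom{N}{(N+1)/2}$ and the algebraic manipulations of $4p(1-p)$ are completely routine.
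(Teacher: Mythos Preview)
Your plan is essentially the paper's: majority-vote binomial tail, Stirling on the central coefficient, and the identity $4p(1-p)$ governing the exponential rate, with the geometric-series refinement producing the factor $(1-2p)^{-1}$ that collapses to $1/\sqrt{N}$ in the coherent case. Two points are worth flagging.

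First, the obstacle you single out is handled in the paper more cheaply than your proposed Rician sub-region argument. Appendix~\ref{SecBG} records (with a citation) that the noncoherent binary error probability \eqref{PNonCoh} is \emph{minimized} when the correlation $R_I(f_1,f_2)=0$, in which case it equals $\tfrac12 e^{-\|f_1-f_2\|^2/(8\sigma^2)}$. Hence for arbitrary single-symbol CPM waveforms one has $p^{\mathrm{NC}}\geq \tfrac12 e^{-C^2/(8\sigma^2)}$ directly, with no integral estimate needed; the condition $C\leq\sigma$ then gives $p^{\mathrm{NC}}\geq\tfrac12 e^{-1/8}$ as you want.

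Second, your i.i.d.\ justification (``disjoint AWGN intervals'') is fine for the noncoherent case but glosses over a real issue in the coherent case: coherent single-symbol demodulation requires the phase at the start of each symbol, which in CPM depends on all prior symbols, so consecutive decisions are \emph{not} independent. The paper does not claim independence there; instead it argues that the best-case single-symbol error occurs when all previous symbols were demodulated correctly (so the phase is known), takes that minimal $p$, and uses the resulting binomial expression as a lower bound. You should incorporate that step rather than assert i.i.d.
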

The distance $C$ in Theorem \ref{ThmSeparateDD} can be calculated
explicitly for certain shaping functions $F$. For binary, orthogonal
FSK with $\mathrm{MI}=1$, it is easy to check that $C=\sqrt{2}$
and that the exponents in Theorem \ref{ThmCombinedDD} become $0.882$
under the same restriction $C\leq\sigma$. These results can also
be formulated in terms of the symbol-to-noise ratio $E_{s}/N_{0}$
instead of the noise power $\sigma^{2}$, using the equivalence $E_{s}/N_{0}=\frac{1}{4\sigma^{2}}\left\Vert s_{\{0\}}-s_{\{1\}}\right\Vert _{L^{2}}^{2}$
for CPM signals \cite{AAS86}. Theorem \ref{ThmSeparateDD} holds
when $E_{s}/N_{0}\leq\frac{1}{4}=-6\,\mathrm{dB}$.\\

Theorems \ref{ThmCombinedDD} and \ref{ThmSeparateDD} together show
that as $N\to\infty$, $p_{\mathrm{Joint}}^{\mathrm{NC}}$ and $p_{\mathrm{Joint}}^{\mathrm{C}}$
decay significantly faster than $p_{\mathrm{Separate}}^{\mathrm{NC}}$
and $p_{\mathrm{Separate}}^{\mathrm{C}}$ for high noise power $\sigma^{2}$,
with a much bigger difference in the noncoherent case. Note that these
estimates are quite conservative when the demodulation is performed
over multiple data bits at a time, taking advantage of the symbol
memory inherent in CPM, but they serve to illustrate the improvement
of the combined approach over the separated approach. We also point
out that $p_{\mathrm{Joint}}^{\mathrm{C}}$ and $p_{\mathrm{Joint}}^{\mathrm{NC}}$
are of similar order and only differ by an $N^{-1/2}$ factor, while
$p_{\mathrm{Separate}}^{\mathrm{C}}$ is far smaller than $p_{\mathrm{Separate}}^{\mathrm{NC}}$.
This is intuitively reasonable, since the longer noncoherent observation
intervals in the combined approach make more use of the memory in
a CPM signal and are comparable to accounting for the entire symbol
history that led to the phase at the most recent symbol position.

\section{Extensions and simulations\label{SecSim}}

In this section, we discuss some extensions of the ideas in Section
\ref{SecPerf} and consider numerical Monte Carlo simulations of the
demodulation approaches we have studied. Some plots comparing the
spectral density of a DSSS spread coded CPM signal to that of uncoded
and repetition coded signals are shown in Figure \ref{FigSpec}. As
an illustrative example, we consider a $6$Mbit/sec GFSK signal formed
from $4000$ random, equiprobable data bits and sampled at $30$Mbit/sec,
with $N=10$, $h=0.8$ and $\mathrm{BT}=0.3$. These parameters are
similar to the GMSK modulation used in GSM, but the higher modulation
index results in a flatter spectral density shape over the main lobe.
The spectral density is estimated using the multitaper method \cite{Th82}
and the spread coded spectrum appears slightly thicker due to the
signal being ten times as long, but the overall shape is unchanged
from the uncoded case and remains flat over the main lobe. The fixed
length of the code word sequence means that the signal will still
contain periodicities, but these will be at (baseband) frequencies
that are too low to be detectable or relevant in practice. In our
simulations, we assume that the code words are generated in advance
and stored in memory at both the transmitter and receiver ends. Alternatively,
a deterministic procedure such as a maximum-length shift register
with a known initial state can be used to generate the code words
in real-time at both ends. Both implementation approaches have essentially
the same effect on the spectral density.\\

\begin{figure}[h]
\begin{centering}
\includegraphics[scale=0.5]{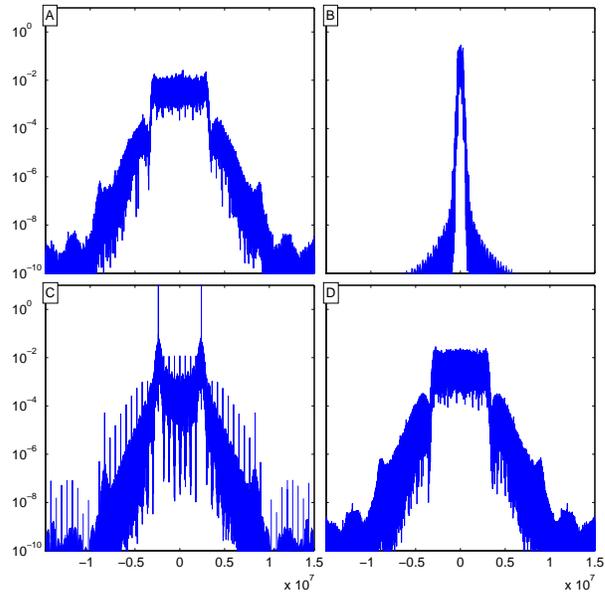}
\par\end{centering}

\caption{\label{FigSpec} Spectral density profiles with (A) no coding applied,
(B) no coding but with a reduced, $600$kbit/sec symbol rate, (C)
repetition coding with rate $N=10$, and (D) spread coding with rate
$N=10$.}
\end{figure}

In general, the distances between spread coded CPM waveforms may differ
from the repetition case addressed by Theorem \ref{ThmL2}. Except
in some special cases (e.g. when $h=1$), the periodicity techniques
used to prove Theorem \ref{ThmL2} (see Appendix \ref{SecProofs})
no longer apply and the waveforms in (\ref{L2Bound}) may no longer
be approximately orthogonal. However, on average, we find that the
results of Theorem \ref{ThmL2} still hold in practice. In Figure
\ref{FigDist}, we numerically examine the distances for the modulation
parameters discussed earlier at different rates $N$. We take all
$2^{N}$ symbol sequences $B$ of length $N$ and calculate the mean
of the distances $\left\Vert s_{B}-s_{B'}\right\Vert _{L^{2}(0,N)}$
over all such $B$. It can be seen that for large $N$, this mean
distance approaches the same $\sqrt{2N}$ limit established in Theorem
\ref{ThmL2}.\\

\begin{figure}[h]
\begin{centering}
\includegraphics[scale=0.6]{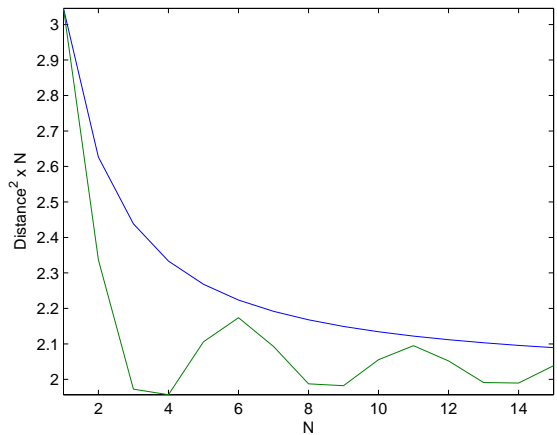}
\par\end{centering}

\caption{\label{FigDist} Mean distances over all $2^{N}$ spread coded waveforms
(blue) at different rates $N$, compared to repetition coded waveforms
(green).}
\end{figure}

We now study the performance of spread coding in simulations and compute
the error rates for a range of code rates $N$. For the purposes of
simulations, we use the method of \textit{noncoherent-block }CPM demodulation
over $K=5$ data bits (see \cite{AAS86} and \cite[p. 298-299]{Pr00})
as a substitute for a fully coherent demodulator. At each bit position
$a$, this demodulator takes a $K$-bit length of the received signal
centered at $a$ and computes envelope correlators (see Appendix \ref{SecBG})
against all $2^{K}$ possible waveforms, outputting the bit $b$ at
$a$ based on the average correlation over all waveforms containing
$b$ at $a$. For $K\leq7$, this approach exhibits performance similar
to the optimal coherent (Viterbi-based) receiver, with rapidly diminishing
gains for larger $K$, and is often used in practice for its low complexity
and simple implementation. For such a demodulator, it is generally
no longer possible to obtain precise asymptotic bounds of the type
we developed in Section \ref{SecPerf} for the special case $K=1$.
However, any block code can be incorporated into this scheme by observing
the signal over $NK$ symbols at a time but correlating it against
only the $2^{K}$ possible waveforms, using knowledge of the underlying
code (e.g. the pseudorandom code words). Hardware implementations
of this scheme typically use $K=5$ or $7$, which allows the demodulation
to be performed in close to real time. We use this approach in the
results that follow.\\

In Figure \ref{FigBER}, we use the same CPM parameters as before
and examine the BER for different code rates $N$ at the noise levels
given by $E_{s}/N_{0}=0\,\mathrm{dB}$ and $-3\,\mathrm{dB}$, with
a uniformly distributed initial phase. We also compare the results
with several standard convolutional codes of rates $1/N$ and constraint
length $8$ \cite[p. 492-494]{Pr00}. The results show that the spread
code with combined, noncoherent-block demodulation greatly outperforms
the other methods, especially in the $E_{s}/N_{0}=-3\,\mathrm{dB}$
case. At low $N$, the convolutional codes actually increase the BER
over an uncoded signal and highlight the drawbacks of powerful, short-length
codes in such noise environments, although as either $N$ or $E_{s}/N_{0}$
increase, we would expect such codes to eventually outperform the
repetition-based spread codes. For moderate rates $N$, these results
indicate that the spread codes have an ``imperfectness'' of $1-2$db
under the combined demodulation approach, in terms of the lower bounds
for codes of such rates discussed in \cite{DDP98}, so it is not possible
for a coding scheme to perform substantially better without using
larger block lengths. We also note that the BERs obtained here are
comparable to trellis coded noncoherent CPM schemes discussed in \cite{Yi96}
or the baseline serially concatenated schemes described in \cite{MA01}
(the latter paper demonstrates much better BERs by inserting interleavers
that are $2000$ or more symbols long, which effectively increases
the block length and would be unsuitable for our design constraints
as discussed in Section \ref{SecBG}).

\begin{figure}[h]
\centering{}\includegraphics[scale=0.65]{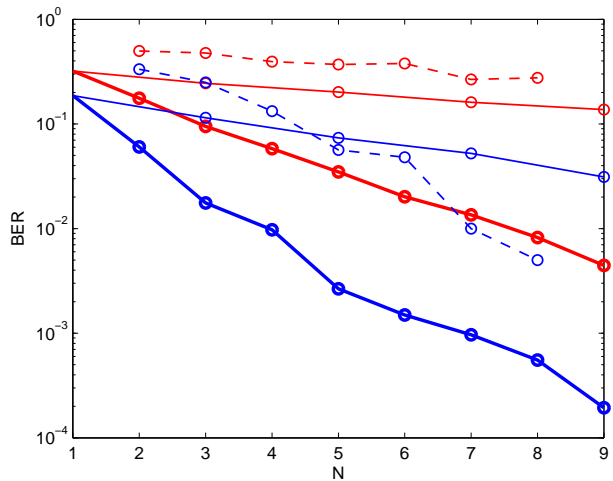}\caption{\label{FigBER} BER performance of CPM signals with spread coding
using combined demodulation (bold curves) and separated demodulation
(thin curves), along with convolutional coding (dashed curves), at
$E_{s}/N_{0}=0\,\mathrm{dB}$ (blue) and $E_{s}/N_{0}=-3\,\mathrm{dB}$
(red).}
\end{figure}

We next compare the performance of a fixed rate spread code at different
$E_{s}/N_{0}$, as well as in the presence of narrowband interference.
We simulate a single $3$kbit/sec quadrature phase-shift keyed (QPSK)
interferer at a random (but fixed) frequency within the band $\omega_{c}\pm30$Mhz,
which corresponds to the GFSK signal's main lobe (see Figure \ref{FigSpec}),
and with the same power as an equivalent level of white noise determined
by the $E_{s}/N_{0}$. We take $N=6$ and plot the BERs of a spread
coded CPM signal. We also consider the effects of changing the signal
modulation to QPSK at the same data rate. The basic effect of the
spread code increasing the distances between waveforms holds for QPSK
as well, with the same $\sqrt{2N}$ asymptotic behavior as in Theorem
\ref{L2Bound} for large $N$. Finally, we consider an example concatenating
a rate $3$ spread coded CPM signal with a rate $1/2$ convolutional
code at different error rates under white noise, which is how this
technique is best applied in practice. The results are all shown in
\ref{FigEsN0}.

\begin{figure}[h]
\centering{}\includegraphics[scale=0.6]{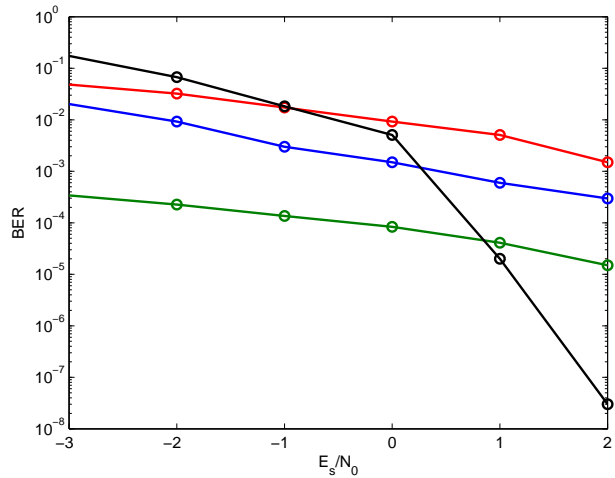}\caption{\label{FigEsN0} Performance of rate $6$ spread coded CPM under white
noise (blue) and narrowband interference (green) using the combined
demodulation approach. Also shown are rate $6$ spread coded QPSK
under white noise (red) and rate $3$ spread coded and rate $1/2$
convolutional coded CPM under white noise (black).}
\end{figure}

\section{Conclusion\label{SecEnd}}

We have proved that combining the processes of decoding and demodulation
with simple, DSSS-based and repetition-like coding schemes can confer
significant advantages in high noise environments. We expect our results
to generalize to broader classes of short-length block codes or other
modulation schemes, as well as other demodulation approaches such
as those that output soft symbol decisions and/or have observation
intervals of multiple data bits. These topics will be pursued in future
work.

\section*{Acknowledgments}

The author would like to thank Dr. Dave Colella, Dr. Richard Orr and
the anonymous referees for many valuable comments on this paper.

\begin{appendices}

\section{Background on signal classification theory\label{SecBG}}

Before proving the Theorems \ref{ThmL2}-\ref{ThmSeparateDD}, we
first review some standard results on continuous-time binary signal
classification from \cite{Po94} and \cite{VT01}. Let $f\in L^{2}(I)$
be a continuous, real deterministic function over some time interval
$I$ and let $G$ be a Gaussian white noise process with power $\sigma^{2}$.
Suppose we receive the (random) signal $Y(t)$ and want to determine
which of the hypotheses $H_{0}=\{Y=G\}$ and $H_{1}=\{Y=f+G\}$ holds.
The likelihood ratio test for this \textit{coherent classification}
problem is given by
\begin{equation}
e^{\frac{1}{\sigma^{2}}\left\langle f,Y\right\rangle _{I}-\frac{d^{2}}{2}}\lessgtr\tau.\label{Detector1}
\end{equation}
where $\tau$ is a fixed threshold, $d=\frac{1}{\sigma}\left\Vert f\right\Vert _{L^{2}}$
and the inner product $\left\langle \cdot,Y\right\rangle _{I}$ is
interpreted as a white noise functional. The performance of the test
is given in terms of the error function $\mathrm{erf}(x)=2\pi{}^{-1/2}\int_{0}^{x}e^{-y^{2}}dy$
by
\begin{eqnarray*}
P(H_{1}|H_{1}) & = & \frac{1}{2}\mathrm{erf}\left(\frac{1}{\sqrt{2}}\left(\frac{\log\tau}{d}+\frac{d}{2}\right)\right)+\frac{1}{2},\\
P(H_{1}|H_{0}) & = & \frac{1}{2}\mathrm{erf}\left(\frac{1}{\sqrt{2}}\left(\frac{\log\tau}{d}-\frac{d}{2}\right)\right)+\frac{1}{2}.
\end{eqnarray*}

We take $\tau=1$ and $f=f_{1}-f_{2}$, where $f_{1}$ and $f_{2}$
are known, modulated waveforms corresponding to a $0$ or $1$ bit.
The test (\ref{Detector1}) reduces to a simple correlation classifier:
\begin{equation}
\left\langle f_{1},Y\right\rangle _{I}\lessgtr\left\langle f_{2},Y\right\rangle _{I}.\label{Detector2}
\end{equation}
Assuming that $P(H_{0})=P(H_{1})$, the probability of a symbol classification
error is
\begin{eqnarray}
p & = & \frac{1}{2}-\frac{1}{2}\mathrm{erf}\left(\frac{\left\Vert f_{1}-f_{2}\right\Vert _{L^{2}(I)}}{2\sqrt{2}\sigma}\right).\label{PCoherent}
\end{eqnarray}
Similar results can also be shown for \textit{noncoherent classification},
where $f$ is now complex-valued and we instead have $H_{1}=\{Y=2^{1/2}\mathrm{Re}(e^{2\pi i\theta}f)+G\}$
for some unknown phase angle $\theta$. If $\theta$ is assumed to
be random and uniformly distributed over $[0,1]$, the correlation
classifier (\ref{Detector2}) is replaced by the \textit{envelope
classifier} given by
\[
\left|\left\langle f_{1},Y\right\rangle _{I}\right|\lessgtr\left|\left\langle f_{2},Y\right\rangle _{I}\right|.
\]

When $f_{1}$ and $f_{2}$ have equal norms, the error probability
of this classifier is \cite[p. 311]{Pr00}
\begin{eqnarray}
p & = & e^{-\frac{1}{2}(u^{2}+v^{2})}\sum_{k=0}^{\infty}(u/v)^{k}I_{k}(uv)\nonumber \\
 &  & -\frac{1}{2}e^{-\frac{1}{2}(u^{2}+v^{2})}I_{0}(uv),\label{PNonCoh}
\end{eqnarray}
where
\begin{eqnarray*}
u & = & \frac{\left\Vert f_{1}-f_{2}\right\Vert _{L^{2}(I)}}{2\sqrt{2}\sigma}\left(1-\sqrt{1-\left|R_{I}(f_{1},f_{2})\right|^{2}}\right)^{1/2},\\
v & = & \frac{\left\Vert f_{1}-f_{2}\right\Vert _{L^{2}(I)}}{2\sqrt{2}\sigma}\left(1+\sqrt{1-\left|R_{I}(f_{1},f_{2})\right|^{2}}\right)^{1/2},
\end{eqnarray*}

\[
R_{I}(f_{1},f_{2})=\frac{\left\langle f_{1},f_{2}\right\rangle _{I}}{\left\Vert f_{1}\right\Vert _{L^{2}(I)}\left\Vert f_{2}\right\Vert _{L^{2}(I)}},
\]

and $I_{k}(x)=\sum_{n=0}^{\infty}\frac{(x/2)^{2n+k}}{n!(n+k)!}$ is
the modified Bessel function of order $k$. The probability (\ref{PNonCoh})
is minimized when $R_{I}(f_{1},f_{2})=0$ \cite{SBZ10}, in which
case (\ref{PNonCoh}) simplifies to
\begin{equation}
p=\frac{1}{2}e^{-\frac{1}{8\sigma^{2}}\left\Vert f_{1}-f_{2}\right\Vert _{L^{2}(I)}^{2}}.\label{PNonCoh2}
\end{equation}

We also note that the error function in (\ref{PCoherent}) satisfies
the elementary bounds
\begin{equation}
\mathrm{erf}(x)\leq\frac{2x}{\sqrt{\pi}}\label{Erf1}
\end{equation}
for $x\geq0$ and
\begin{equation}
\mathrm{erf}(x)\sim1-\frac{e^{-x^{2}}}{\sqrt{\pi}x}\label{Erf2}
\end{equation}
as $x\to\infty$.

\section{Proof of Theorems \ref{ThmL2}-\ref{ThmSeparateDD}\label{SecProofs}}
\begin{proof}[Proof of Theorem \ref{ThmL2}]
Let $s_{B,\theta_{0}}(t)$ be (\ref{CPM}) with the phase $\theta$
replaced by $\theta+\theta_{0}$, where $\theta_{0}$ is either $0$
or $\frac{1}{4}$. The purpose of this extra parameter will become
clear later. We also define $J(N,t)=\sum_{k=0}^{N}\int_{-\infty}^{t}F(s-k)ds$
to simplify some notation. First, we have
\begin{eqnarray*}
 &  & \left\Vert s_{B,\theta_{0}}-s_{B'}\right\Vert _{L^{2}(0,N)}^{2}\\
 & = & \left\Vert s_{B,\theta_{0}}\right\Vert _{L^{2}(0,N)}^{2}+\left\Vert s_{B'}\right\Vert _{L^{2}(0,N)}^{2}-2\left\langle s_{B,\theta_{0}},s_{B'}\right\rangle _{(0,N)}\\
 & = & 2N-E_{1}(\omega_{c})-2\int_{0}^{N}\cos\bigg(2\pi\bigg(\theta_{0}+\frac{h}{2}\times\\
 &  & \sum_{k=-\infty}^{N}(2b_{k}-1-2b_{k}'+1)\int_{-\infty}^{t}F(s-k)ds\bigg)\bigg)dt
\end{eqnarray*}
\begin{eqnarray}
 & = & 2\int_{0}^{N}\left(1-\cos\left(2\pi\left(\theta_{0}+h\, J(N,t)\right)\right)\right)dt\nonumber \\
 &  & \quad-E_{1}(\omega_{c}),\label{PhaseDiff}
\end{eqnarray}
where $E_{1}(\omega_{c})=O(1/\omega_{c})$. Taking $\omega_{c}\to\infty$
causes this term to drop out. Now the function $1-\cos(2\pi x)$ is
well approximated by a sum of hat functions $2\sum_{m=-\infty}^{\infty}H(x-m)$,
where the hat function $H(x)$ is given by $H(x)=2x$ for $x\in[0,\frac{1}{2})$,
$H(x)=2-2x$ for $x\in[\frac{1}{2},1)$, and $H(x)=0$ otherwise.
We define the approximation error
\[
E_{2}(x)=2\sum_{m=-\infty}^{\infty}H(x-m)-(1-\cos(2\pi x)).
\]
$E_{2}(x)$ is periodic with period $1$ and satisfies $|E_{2}(x)|\leq\frac{1}{4}$
and $E_{2}(x)=-E_{2}(\frac{1}{2}-x)=E_{2}(1-x)$ (see Figure \ref{FigApproxError}),
which implies that $\int_{0}^{M}E_{2}(x+\frac{1}{2})dx=0$ for any
$M\in\mathbb{Z}$. The conditions on $F$ show that $J(N,\cdot)$
is a continuous, nondecreasing function that equals $t+\frac{1}{2}$
for $t\in\mathbb{Z}\cap[0,N]$. This shows that as long as $N\geq\frac{1}{h}$,
there exists a point $M'\leq N$ with $\left|M'-N\right|\leq\frac{1}{h}$
such that
\[
\int_{0}^{M'}E_{2}\left(\theta_{0}+h\, J(N,t)\right)dt=0.
\]
Now let $A\leq0$ be the closest point to $0$ such that $h\, J(N,A)=0$,
and let $B\geq N-\frac{1}{h}$ be the closest point to $N$ such that
$h\, J(N,B)-\frac{1}{2}\in\mathbb{Z}$. Since $F$ is identically
zero outside $[-1,1]$, we must have $A\geq-1$ and $B\leq N+\frac{1}{2}$.
We can use this to find that as $N\to\infty$,
\begin{eqnarray*}
 &  & 2\int_{0}^{N}\left(1-\cos\left(2\pi\left(\theta_{0}+h\, J(N,t)\right)\right)\right)dt\\
 & = & 4\int_{0}^{N}\sum_{m=-\infty}^{\infty}H\left(\theta_{0}+h\, J(N,t)-m\right)dt-\\
 &  & \quad2\int_{0}^{N}E_{2}\left(\theta_{0}+h\, J(N,t)\right)dt\\
 & \sim & 4\int_{A}^{B}\sum_{m=-\infty}^{\infty}H\left(\theta_{0}+h\, J(N,t)-m\right)dt-\\
 &  & \quad2\int_{M'}^{N}E_{2}\left(\theta_{0}+h\, J(N,t)\right)dt\\
 & \sim & 4\cdot\frac{B-A}{2}+E_{0}(N)\\
 & \sim & 2N,
\end{eqnarray*}
where $E_{0}(N)=2\int_{M'}^{N}E_{2}\left(\theta_{0}+h\, J(N,t)\right)dt$
satisfies the bound $|E_{0}(N)|\leq2(-A-(B-N))+\frac{N-M'}{2}\leq\frac{3}{2}\left(1+\frac{1}{h}\right)$.
Taking $\theta_{0}=0$ completes the proof. For the complex case (\ref{L2BoundP+}),
we simply replace $\theta$ by $\theta+\frac{1}{4}$ and obtain the
same bound for the quadrature signal $s_{B}^{\star}$ with $\cos(\cdot)$
in (\ref{CPM}) replaced by $\sin(\cdot)$, and therefore also for
the analytic signal $P^{+}s_{B}=2^{-1/2}(s_{B}+is_{B}^{\star})$.
Finally, for the inner product estimate (\ref{IPBound}), we first
take $\theta_{0}=0$ to get
\begin{eqnarray*}
 &  & \mathrm{Re}\left(R_{(0,N)}(P^{+}s_{B},P^{+}s_{B'})\right)\\
 & = & \frac{\left\Vert P^{+}s_{B}\right\Vert _{L^{2}(0,N)}^{2}+\left\Vert P^{+}s_{B'}\right\Vert _{L^{2}(0,N)}^{2}}{2\left\Vert P^{+}s_{B}\right\Vert _{L^{2}(0,N)}\left\Vert P^{+}s_{B'}\right\Vert _{L^{2}(0,N)}}\\
 &  & \quad-\frac{\left\Vert P^{+}s_{B}-P^{+}s_{B'}\right\Vert _{L^{2}(0,N)}^{2}}{2\left\Vert P^{+}s_{B}\right\Vert _{L^{2}(0,N)}\left\Vert P^{+}s_{B'}\right\Vert _{L^{2}(0,N)}}\\
 & \sim & \frac{E_{0}(N)}{2N}.
\end{eqnarray*}
By taking $\theta_{0}=\frac{1}{4}$, the same result follows for $\mathrm{Im}\left(R_{(0,N)}(P^{+}s_{B},P^{+}s_{B'})\right)$,
and thus
\[
\left|R_{(0,N)}(P^{+}s_{B},P^{+}s_{B'})\right|\sim\frac{E_{0}(N)}{\sqrt{2}N}.
\]

\end{proof}
$\quad$
\begin{figure}[h]
\centering{}\includegraphics[scale=0.5]{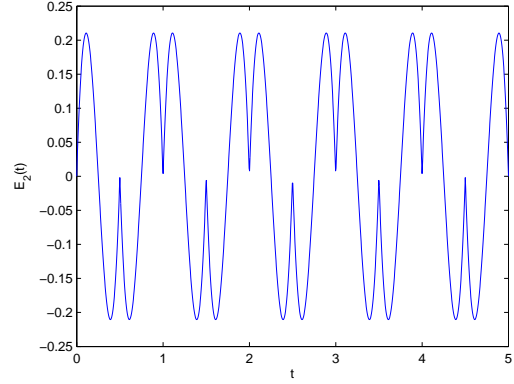}\caption{\label{FigApproxError} The approximation error $E_{2}$.}
\end{figure}

\begin{proof}[Proof of Theorem \ref{ThmCombinedDD}]
The coherent result follows immediately from combining (\ref{PCoherent})
and (\ref{L2Bound}). As $N\to\infty$, we use (\ref{Erf2}) to find
that
\begin{eqnarray*}
p_{\mathrm{Joint}}^{\mathrm{C}} & \sim & \frac{1}{2}-\frac{1}{2}\mathrm{erf}\left(\frac{\sqrt{2N}}{2\sqrt{2}\sigma}\right).\\
 & \sim & \frac{1}{2}-\frac{1}{2}\left(1-\frac{e^{-\left(\frac{\sqrt{2N}}{2\sqrt{2}\sigma}\right)^{2}}}{\sqrt{\pi}\left(\frac{\sqrt{2N}}{2\sqrt{2}\sigma}\right)}\right)\\
 & = & \frac{\sigma}{\sqrt{\pi N}}e^{-\frac{N}{4\sigma^{2}}}.
\end{eqnarray*}

The noncoherent case uses the ``almost orthogonality'' implied by
(\ref{IPBound}) to show that (\ref{PNonCoh}) is closely approximated
by (\ref{PNonCoh2}). Letting $\delta_{0,k}=1$ for $k=0$ and $\delta_{0,k}=0$
otherwise, we use (\ref{L2BoundP+}) and (\ref{IPBound}) in (\ref{PNonCoh})
to get
\begin{eqnarray}
 &  & p_{\mathrm{Joint}}^{\mathrm{NC}}\nonumber \\
 & = & e^{-\frac{1}{2}(u^{2}+v^{2})}\sum_{k=0}^{\infty}\left(1-\frac{\delta_{0,k}}{2}\right)(u/v)^{k}I_{k}(uv)\\
 & = & e^{-\frac{1}{8\sigma^{2}}\left\Vert P^{+}s_{B}-P^{+}s_{B'}\right\Vert _{L^{2}(0,N)}^{2}}\sum_{k=0}^{\infty}\left(1-\frac{\delta_{0,k}}{2}\right)\times\nonumber \\
 &  & \quad\left(\frac{1-\sqrt{1-\left|R_{(0,N)}(P^{+}s_{B},P^{+}s_{B'})\right|^{2}}}{1+\sqrt{1-\left|R_{(0,N)}(P^{+}s_{B},P^{+}s_{B'})\right|^{2}}}\right)^{k/2}\times\nonumber \\
 &  & \quad I_{k}\left(\frac{1}{8\sigma^{2}}\frac{\left\Vert P^{+}s_{B}-P^{+}s_{B'}\right\Vert _{L^{2}(0,N)}^{2}}{\left|R_{(0,N)}(P^{+}s_{B},P^{+}s_{B'})\right|^{-1}}\right)\nonumber \\
 & \sim & \sum_{k=0}^{\infty}\left(1-\frac{\delta_{0,k}}{2}\right)\left(\frac{1-\sqrt{1-E_{0}(N)/(4N^{2})}}{1+\sqrt{1-E_{0}(N)/(4N^{2})}}\right)^{k/2}\nonumber \\
 &  & \quad\times I_{k}\left(\frac{\sqrt{2}E_{0}(N)}{8\sigma^{2}}\right)e^{-\frac{N}{4\sigma^{2}}}.\label{BigSum}
\end{eqnarray}
Using the basic property $\left|I_{k}(x)\right|\geq\left|I_{k+1}(x)\right|$
for each $k\geq1$ \cite{WW27}, we conclude that as $N\to\infty$,
all terms in the sum (\ref{BigSum}) go to zero uniformly in $k$
except for the $k=0$ term, which goes to $\frac{1}{2}I_{0}\left(\frac{\sqrt{2}E_{0}(N)}{8\sigma^{2}}\right)e^{-\frac{N}{4\sigma^{2}}}$.
Consequently, we end up with
\begin{eqnarray*}
p_{\mathrm{Joint}}^{\mathrm{NC}} & \sim & \frac{1}{2}I_{0}\left(\frac{\sqrt{2}E_{0}(N)}{8\sigma^{2}}\right)e^{-\frac{N}{4\sigma^{2}}}.
\end{eqnarray*}

\end{proof}
$\quad$
\begin{proof}[Proof of Theorem \ref{ThmSeparateDD}]
If $p$ is the error probability in demodulating an individual symbol,
the error probability in a majority vote decoder is given by
\begin{equation}
p'=\sum_{k=(N+1)/2}^{N}\frac{N!}{k!(N-k)!}p^{k}(1-p)^{N-k}.\label{MajVote}
\end{equation}
For large $N$, we can approximate this using a classical extension
of the Stirling expansion for the gamma function,
\[
\Gamma(N+k+1)=\sqrt{2\pi}N^{N+k+\frac{1}{2}}e^{-N+E_{3}(N,k)},
\]
with the error $E_{3}$ satisfying
\[
\frac{1}{N+1}\leq\frac{E_{3}(N,k)}{\frac{k^{2}}{2}-\frac{k}{2}+\frac{1}{12}}\leq\frac{1}{N}
\]
for $N\geq k>0$ \cite{WW27}. This can be applied to (\ref{MajVote})
to obtain a simple geometric series after some simplification:
\begin{eqnarray}
 &  & p'\nonumber \\
 & = & \sum_{k=0}^{(N-1)/2}\frac{\Gamma(N+1)}{\Gamma(\frac{N+1}{2}+k+1)\Gamma(\frac{N-1}{2}-k+1)}\times\nonumber \\
 &  & \quad p^{\frac{N+1}{2}+k}(1-p)^{\frac{N-1}{2}-k}\nonumber \\
 & \sim & \sum_{k=0}^{(N-1)/2}\frac{N^{N+\frac{1}{2}}e^{-N}p^{\frac{N+1}{2}+k}(1-p)^{\frac{N-1}{2}-k}}{(\frac{N+1}{2})^{\frac{N+1}{2}+k+\frac{1}{2}}(\frac{N-1}{2})^{\frac{N-1}{2}-k+\frac{1}{2}}e^{-N}}\nonumber \\
 &  & \quad\times(2\pi)^{-1/2}e^{E_{3}(N,k)-E_{3}(\frac{N+1}{2},k)-E_{3}(\frac{N-1}{2},-k)}\nonumber \\
 & \sim & \frac{N^{N+\frac{1}{2}}p^{\frac{N+1}{2}}(1-p)^{\frac{N-1}{2}}}{\sqrt{2\pi}(\frac{N+1}{2})^{\frac{N+1}{2}+\frac{1}{2}}(\frac{N-1}{2})^{\frac{N-1}{2}+\frac{1}{2}}}\times\nonumber \\
 &  & \quad\sum_{k=0}^{\infty}\left(\frac{4p}{\left(N^{2}-1\right)(1-p)}\right)^{k}\times\nonumber \\
 &  & \quad e^{-\frac{1}{N}\left(\frac{(3N^{2}+1)(k^{2}+1/6)+(N^{2}+4N-1))k}{2(N^{2}-1)}\right)}\label{ExpoSum}\\
 & \sim & \frac{(2N)^{N+\frac{1}{2}}(p(1-p))^{\frac{N+1}{2}}}{(N^{2}-1)^{\frac{N}{2}}((1-2p)N+1)\sqrt{\pi}}\nonumber \\
 & = & \frac{(4p(1-p))^{\frac{N+1}{2}}}{\sqrt{2\pi}((1-2p)\sqrt{N}+1/\sqrt{N})},\label{ApproxSum}
\end{eqnarray}
where in (\ref{ExpoSum}), we used the fact that the exponential factor
in the sum tends uniformly to $1$ for all $k$. For the noncoherent
case, we obtain a lower bound on the overall BER by assuming $\left\langle P^{+}s_{\{0\}},P^{+}s_{\{1\}}\right\rangle =0$,
so that (\ref{PNonCoh2}) applies, and using (\ref{L2BoundP+}) and
the condition $C\leq\sigma$. The function $e^{-x}(2-e^{-x})$ is
decreasing for $x>0$, so for sufficiently large $N$,
\begin{eqnarray*}
 &  & p_{\mathrm{Separate}}^{\mathrm{NC}}\\
 & \geq & \frac{\left(4\left(\frac{1}{2}e^{-C^{2}/(8\sigma^{2})}\right)\left(1-\frac{1}{2}e^{-C^{2}/(8\sigma^{2})}\right)\right)^{\frac{N+1}{2}}}{\sqrt{2\pi N}\left(1-\frac{1}{2}e^{-C^{2}/(8\sigma^{2})}\right)}\\
 & \geq & \frac{\left(e^{-1/8}\left(2-e^{-1/8}\right)\right)^{\frac{N+1}{2}}}{\sqrt{2\pi N}\left(1-\frac{1}{2}e^{-1/8}\right)}.
\end{eqnarray*}
In the coherent case, the error probabilities $p$ at consecutive
symbols from the demodulator are no longer independent, but we can
again find a lower bound on the BER by considering a best-case scenario
where at any symbol position, all previous symbols were demodulated
correctly and the initial phase is thus known. For sufficiently large
$N$, we use the estimate (\ref{L2Bound}) with (\ref{ApproxSum})
to get
\begin{eqnarray*}
 &  & p_{\mathrm{Separate}}^{\mathrm{C}}\\
 & \geq & \frac{\left(4\left(\frac{1}{2}-\frac{1}{2}\mathrm{erf}\left(\frac{C}{2\sqrt{2}\sigma}\right)\right)\left(-\frac{1}{2}+\frac{1}{2}\mathrm{erf}\left(\frac{C}{2\sqrt{2}\sigma}\right)\right)\right)^{\frac{N+1}{2}}}{\sqrt{2\pi N}\mathrm{erf}\left(\frac{C}{2\sqrt{2}\sigma}\right)}\\
 & = & \frac{1}{\sqrt{2\pi N}\mathrm{erf}\left(\frac{C}{2\sqrt{2}\sigma}\right)}\left(1-\mathrm{erf}\left(\frac{C}{2\sqrt{2}\sigma}\right)^{2}\right)^{\frac{N+1}{2}}.
\end{eqnarray*}
Since $C\leq\sigma$, we take the inequality (\ref{Erf1}) into account
and obtain
\begin{eqnarray*}
 &  & p_{\mathrm{Separate}}^{\mathrm{C}}\\
 & \geq & \frac{1}{\sqrt{2\pi N}\left(\frac{C}{\sqrt{2\pi}\sigma}\right)}\left(1-\left(\frac{C}{\sqrt{2\pi}\sigma}\right)^{2}\right)^{\frac{N+1}{2}}\\
 & \geq & \frac{1}{\sqrt{N}}\left(1-\frac{1}{2\pi}\right)^{\frac{N+1}{2}}.
\end{eqnarray*}

\end{proof}
\end{appendices}

\bibliographystyle{plain}
\bibliography{Fullbib}

\begin{thebibliography}{10}

\bibitem{AAS86}
J.~B. Anderson, T.~Aulin, and C.-E. Sundberg.
\newblock {\em {Digital Phase Modulation}}.
\newblock Kluwer, 1986.

\bibitem{AS91}
J.~B. Anderson and C.~E. Sundberg.
\newblock {Advances in Constant Envelope Coded Modulation}.
\newblock {\em IEEE Communications Magazine}, 29(12):36--45, 1991.

\bibitem{AHK98}
D.~K. Asano, T.~Hayashi, and R.~Kohno.
\newblock {Modulation and processing gain tradeoffs in DS-CDMA spread spectrum
  systems}.
\newblock {\em IEEE 5th International Symposium on Spread Spectrum Techniques
  and Applications}, 1:9--13, 1998.

\bibitem{BC07}
A.~Barbieri and G.~Colavolpe.
\newblock {Simplified soft-output detection of CPM signals over coherent and
  phase noise channels}.
\newblock {\em IEEE Transactions on Wireless Communications}, 6:2486--2496,
  2007.

\bibitem{BFC09}
A.~Barbieri, D.~Fertonani, and G.~Colavolpe.
\newblock {Spectrally-efficient continuous phase modulations}.
\newblock {\em IEEE Transactions on Wireless Communications}, 8:1564--1572,
  2009.

\bibitem{CR97}
G.~Colavolpe and R.~Raheli.
\newblock {Reduced-complexity detection and phase synchronization of CPM
  signals}.
\newblock {\em IEEE Transactions on Communications}, 45(9):1070--1079, 1997.

\bibitem{CR99}
G.~Colavolpe and R.~Raheli.
\newblock {Noncoherent sequence detection of continuous phase modulations}.
\newblock {\em IEEE Transactions on Communications}, 47:1303--1307, 1999.

\bibitem{DDP98}
S.~Dolinar, D.~Divsalar, and F.~Pollara.
\newblock {Code performance as a function of block length}.
\newblock {\em TMO Progress Report 42-133, NASA JPL}, 1998.

\bibitem{DDP98-2}
S.~Dolinar, D.~Divsalar, and F.~Pollara.
\newblock {Turbo Code Performance as a Function of Code Block Size}.
\newblock {\em Proc. IEEE International Symposium on Information Theory}, 1998.

\bibitem{Fo94}
J.~P. Fonseka.
\newblock {Block encoding with continuous phase modulation}.
\newblock {\em IEEE Transactions on Communications}, 42(12):3069--3072, 1994.

\bibitem{CCSDS07}
The Consultative~Committee for Space Data~Systems.
\newblock {Flexible serially concatenated convolutional turbo codes with
  near-Shannon bound performance for telemetry applications, experimental
  specification}.
\newblock 2007.

\bibitem{GBD07}
A.~Graell~i Amat, C.A. Nour, and C.~Douillard.
\newblock {Serially Concatenated Continuous Phase Modulation with Extended BCH
  Codes}.
\newblock {\em 2007 IEEE Workshop on Information Theory for Wireless Networks},
  pages 1--5, 2007.

\bibitem{JM05}
A.~Jamin and P.~Mahonen.
\newblock {Wavelet packet modulation for wireless communications}.
\newblock {\em Wireless Communications and Mobile Computing}, 5(2):123--127,
  2005.

\bibitem{La88}
W.~Lane.
\newblock {Spread Spectrum Multi-h Modulation}.
\newblock {\em Ph.D. Thesis, Georgia Institute of Technology}, 1988.

\bibitem{Li97}
A.~R. Lindsey.
\newblock {Wavelet packet modulation for orthogonally multiplexed
  communication}.
\newblock {\em IEEE Transactions on Signal Processing}, 45(5):1336--1339, 1997.

\bibitem{LL94}
T.~M. Lok and J.~S. Lehnert.
\newblock {DS/SSMA communication system with trellis coding and CPM}.
\newblock {\em IEEE Journal on Selected Areas in Communications},
  12(4):716--722, 1994.

\bibitem{MCB13}
N.~Mazzali, G.~Colavolpe, and S.~Buzzi.
\newblock {CPM-based spread spectrum systems for multi-user communications}.
\newblock {\em IEEE Transactions on Wireless Communications}, 12:358--367,
  2013.

\bibitem{MM95}
U.~Mengali and M.~Morelli.
\newblock {Decomposition of M-ary CPM Signals into PAM Waveforms}.
\newblock {\em IEEE Transactions on Information Theory}, 41(5):1265--1275,
  1995.

\bibitem{MA01}
P.~Moqvist and T.~M. Aulin.
\newblock {Serially concatenated continuous phase modulation with iterative
  decoding}.
\newblock {\em IEEE Transactions on Communications}, 49:1901--1915, 2001.

\bibitem{MA03}
P.~Moqvist and T.~M. Aulin.
\newblock {Orthogonalization by principal components applied to CPM}.
\newblock {\em IEEE Transactions on Communications}, 51(11):1838--1845, 2003.

\bibitem{MP84}
F.~Morales-Moreno and S.~Pasupathy.
\newblock {Convolutional codes and MSK modulation: a combined optimization}.
\newblock {\em Proc. 12th Biennial Sympopsium on Communications}, pages
  c1.4--c1.7, 1984.

\bibitem{NS01}
K.~R. Narayanan and G.~L. Stuber.
\newblock {Performance of Trellis-coded CPM with Iterative Demodulation and
  Decoding}.
\newblock {\em IEEE Transactions on Communications}, 49(4):676--687, 2001.

\bibitem{PDL90}
S.~S. Pietrobon, R.~H. Deng, A.~Lafanechere, G.~Ungerboeck, and D.J. Costello.
\newblock {Trellis-coded multidimensional phase modulation}.
\newblock {\em IEEE Transactions on Information Theory}, 36(1):63--89, 1990.

\bibitem{Po94}
H.~V. Poor.
\newblock {\em {An Introduction to Signal Detection and Estimation}}.
\newblock Springer, 1994.

\bibitem{Pr00}
J.~G. Proakis.
\newblock {\em {Digital Communications, Fourth Ed.}}
\newblock McGraw-Hill, 2000.

\bibitem{Ri88}
B.~E. Rimoldi.
\newblock {A decomposition approach to CPM}.
\newblock {\em IEEE Transactions on Information Theory}, 36(2):260--270, 1988.

\bibitem{SSM05}
P.~J. Schreier, L.~L. Scharf, and C.~T. Mullis.
\newblock {Detection and Estimation of Improper Complex Random Signals}.
\newblock {\em IEEE Transactions on Information Theory}, 51(1), 2005.

\bibitem{SBZ10}
Y.~Sun, A.~Baricz, and S.~Zhou.
\newblock {On the monotonicity, log-concavity and tight bounds of the
  generalized Marcum and Nuttall Q-functions}.
\newblock {\em IEEE Transactions on Information Theory}, 56(3):1166--1186,
  2010.

\bibitem{Th82}
D.J. Thomson.
\newblock {Spectrum estimation and harmonic analysis}.
\newblock {\em Proceedings of the IEEE}, 70:1055--1096, 1982.

\bibitem{Un82}
G.~Ungerboeck.
\newblock {Channel coding with multilevel/phase signals}.
\newblock {\em IEEE Transactions on Information Theory}, 28(1):55--67, 1982.

\bibitem{VT01}
H.~L. Van~Trees.
\newblock {\em {Detection, Estimation and Modulation Theory: Part I}}.
\newblock Wiley, 2001.

\bibitem{WW27}
E.~T. Whittaker and G.~H. Watson.
\newblock {\em A Course of Modern Analysis, Fourth Ed.}
\newblock Cambridge Math. Library. Cambridge Univ. Press, 1927.

\bibitem{Yi96}
L.~Yiin and G.~Stuber.
\newblock {Noncoherently detected trellis-coded partial response CPM on mobile
  radio channels}.
\newblock {\em IEEE Transactions on Communications}, 44(8):967--975, 1996.

\end{thebibliography}

\end{document}